\newtheorem{theorem}{Theorem}[section]
\newtheorem{lemma}[theorem]{Lemma}
\theoremstyle{definition}
\newtheorem{definition}[theorem]{Definition}
\theoremstyle{remark}
\newtheorem{remark}[theorem]{Remark}
\newtheorem{example}{Example}[section]
\begin{document}

\title{Curse of Dimensionality in Pivot-based Indexes}
\author{\IEEEauthorblockN{Ilya Volnyansky, Vladimir Pestov}
\IEEEauthorblockA{
Department of Mathematics and Statistics\\
University of Ottawa\\ Ottawa, Ontario, Canada\\ 
\{ivoln012,vpest283\}@uottawa.ca\\
}
}

\maketitle
\thispagestyle{empty}

\begin{abstract}
We offer a theoretical validation of the curse of dimensionality in the pivot-based indexing of datasets for similarity search, by proving, in the framework of statistical learning, that in high dimensions no pivot-based indexing scheme can essentially outperform the linear scan.

A study of the asymptotic performance of pivot-based indexing schemes is performed on a sequence of datasets modeled as samples picked in i.i.d. fashion from a sequence of metric spaces.
We allow the size of the dataset to grow in relation to dimension, such that the dimension is superlogarithmic but subpolynomial in the size of the dataset.
The number of pivots is sublinear in the size of the dataset.
We pick the least restrictive cost model of similarity search where we count each distance calculation as a single computation and disregard the rest.

We demonstrate that if the intrinsic dimension of the spaces in the sense of concentration of measure phenomenon is linear in dimension, then
the performance of similarity search pivot-based indexes is asymptotically linear in the size of the dataset. 

\end{abstract}
\begin{IEEEkeywords}
Data structures; Similarity search; Curse of dimensionality; Concentration of Measure;

\end{IEEEkeywords}
\section{Introduction}
The problem of similarity search in databases is addressed by building indexing schemes of various types \cite{ciaccia:97,chavez:1,chavez:5,zezula}.
The goal of such structures is that a search algorithm can exploit them to perform similarity search in time sublinear in the database size.
That indexing schemes do not scale well with increasing dimension has been referred to as ``the curse of dimensionality'' \cite{beyer,indyk:1}.

We feel that in order to gain a better insight into the nature of the curse of dimensionality, it is necessary to have a precise mathematical understanding of the geometric and algorithmic aspects of what happens in genuinely high-dimensional datasets. With this purpose, we have chosen to analyse one of the most popular indexing schemes for similarity search, the one based on pivots \cite{bustos,chavez:1}. The mathematical setting for our analysis is a rigorous model of statistical learning theory \cite{blumer,devroye,vapnik,vidyasagar:03}, where datasets are drawn randomly from domains of increasing dimension.

This probabilistic setting is similar to that used in a previous asymptotic analysis of similarity search \cite{shaft:06}.
We also adopt a cost model where we count distance computations only, in line with \cite{shaft:06}.
Unlike this previous work, we let both the dimension $d$ and the size of the dataset $n$ grow as described in \cite{indyk:1}.
We also make the distinction between the dataset and the data space mathematically explicit.
In particular we emphasize that statements of the type ``all indexing scheme will degenerate to linear scan with increasing dimension'' (to paraphrase \cite{weber:98}) will always need to be qualified with estimates of the probability. For it is not impossible to sample a hypercube uniformly and come up with a ``distribution with a million clusters''\cite{shaft:06}.

Our analysis is done on a sequence of (data) spaces that exhibit the {\em concentration of measure} phenomenon \cite{gromov:83,milman:86} (Sect. \ref{s:conc}), a concept linked to what is called in \cite{shaft:06} {\em workloads with vanishing variance}. It is also in terms of this concentration of measure that we define the dimension $d$. To show that the above situation with a million clusters cannot happen (too often) we study the convergence of empirical probabilities to their true values using a result from Statistical Learning Theory \cite{vapnik}. We introduce a property of a sequence of spaces which is sufficient to invoke this result.

The conclusion of our analysis (Sect. \ref{s:main}) is that for high dimensional datasets the class of pivot-based indexing schemes cannot significantly outperform the baseline linear scan of checking every element of the database. 

\section{Metrics, measures, and datasets}
We model the dataset as a sample of a metric space with measure. A {\em metric} (or: {\em distance}) on a set $X$ will be denoted by $\rho$, and we will not remind the definition.
%
%
The (open) ball of radius $r$ and centre $q$ in a metric space $(\Omega, \rho )$ is denoted 
\[B_{r}(q):=\{\omega\in\Omega\, |\, \rho(q,\omega )< 
r\}.\]

The 
family $\mathcal{B}=\mathcal{B}_\Omega$ of {\em Borel subsets} of a metric space $(\Omega, \rho )$ is the smallest family containing all the open balls and the entire set $\Omega$ and closed under complements and countable unions. 

A {\em (Borel) probability measure} on the space $(\Omega, \rho )$ is a function $\mu: \mathcal{B}_{\Omega}\rightarrow [0,1]$ s.t.
$\mu(\Omega)=1$, and which is countably additive: for a sequence $B_1,B_2\ldots,$ of pairwise disjoint sets from $\mathcal{B}$, $\mu(\bigcup_i{B_i})=\sum_i{\mu(B_i)}$

A dataset however large is always a finite subset $X\subset\Omega$.
It naturally inherits the metric $\rho|_X$ and in place of $\mu$ supports the {\em normalized counting} (also: {\em empirical}) probability measure:
 \[\mu_\# (A) = \frac{|A\cap X|}{|X|}.\]
We will treat $X$ as a sample of $\Omega$ with regard to the measure $\mu$, that is, a sequence of i.i.d. random variables $(X_i)\sim\mu$.

Given a domain together with a dataset $X\subseteq\Omega$, we can perform several kinds of similarity queries, with our focus on two main ones.
A {\em $k$ nearest neighbour query} consists of, given query centre (key) $q\in\Omega$, finding the k closest elements in $X$ to $q$.
To answer a {\em range query} is to find all the elements in $X$ within distance $r$ from $q$.

To distinguish between $X$ and $\Omega$ formally is necessary precisely because a typical search will begin with a centre $q\in\Omega$, with $q\in X$ as well being rare.

To answer a similarity query we can revert to the strategy of looking up every element in $x\in X$ and calculating $\rho(q,x)$.
Following \cite{chavez:1}, we adopt the number of distance calculations $\rho(q,x)$ as the unit of time complexity.
In that framework we will call the above strategy a {\em linear scan.} 

An {\em indexing scheme} is a structure whose aim is to speed up the execution of similarity queries on a particular dataset, typically consisting of some pre-calculated values and an algorithm.



\section{The curse of dimensionality}
An often repeated observation is the inability of many existing algorithms to deal with high dimensional datasets (e.g. \cite{beyer})-- a phenomenon described as the {\em curse of dimensionality}, when 
performance drops exponentially as a function of dimension.

The concept of dimension in a general metric space with measure is less precise.
Clearly it has to obey our intuition in Euclidian space so for example a plane in the 10-dimensional space $\mathbb{R}^n$ is still 2-dimensional, and
it would be desirable for a uniformly distributed ball in $\mathbb{R}^d$ to be $d$-dimensional.
%

A version of intrinsic dimension was proposed by the
authors \cite{chavez:2} as
 \[\tilde{d} = \frac{\mathrm{E}(\rho(x,y))^2}{2\mathrm{Var}(\rho(x,y))},\]
 where $x,y\sim \mu$, the distribution of points in $\Omega$.
It is based on the observation that if the histogram of distances from $q$ to points in $X$ shows a lot of ``concentration'', this will be a hard query to process as it is harder to rule out points using a triangle inequality type approach.
That the above dimension is asymptotically equal to the usual notion in Euclidian spaces is mentioned in \cite{chavez:2}, where a result on time complexity of search in term of $\tilde{d}$ is also stated.
It is a lower bound on the order of $\tilde{d}\, \ln (n)$. 

In this article, we will use another approach to the intrinsic dimension, elaborated in \cite{pestov:08} and also based on the phenomenon of concentration of measure, cf. Sect. \ref{s:conc}.

In general the time complexity we are looking for in search depends both on dimension (henceforth we will simply call it $d$) and size of dataset $n$.
An asymptotic analysis of the performance of indexing schemes will therefore involve both
$d\rightarrow\infty$ and $n\rightarrow\infty$.
Search in sublinear time in $n$ is an obvious goal:
\[\text{querytime }=o(n).\]
where by querytime we mean the average time it takes for a similarity query to execute, time measured in distance computations.

Storage is also important, with at most polynomial storage allowed in theoretical analysis (though in practice even $n^2$ may be too much):
\[\text{storage }=n^{O(1)}.\]
For the pivot-based indexing scheme the storage will be measured by the number of distances {\em stored}.

We will follow an approach in the authoritative survey by \cite{indyk:1} and focus on a particular range for rate of growth for dimension $d$, superlogarithmic but subpolynomial in $n$:
 \begin{equation}d=\omega(\log n)\label{eq:lbN}\end{equation}
 \begin{equation}d=n^{o(1)}\end{equation}
This choice of bounds is due to a case study of the Hamming cubes. Recall that 
the {\em Hamming cube} $\Sigma ^d$ of dimension $d$ is the set of all binary sequences of length $d$, 
and the distance between two strings is just the number of elements they don't have in common divided by $d$:
\[\rho(\boldsymbol{x},\boldsymbol{y})=\frac{\sum_{i=1}^{d}{|x_i-y_i|}}{d}\]
(the {\em normalized Hamming distance}).

In the case where $d$ grows slowly, $d=O(\log n)$,
all possible queries can be pre-computed and stored without breaking the polynomial storage requirement. Hence the lower bound.
The upper bound results from the observation that if $d$ grew so fast that $n=d^{O(1)}$, a sequential scan would be polynomial in $d$ and so acceptable. 

Summarizing:
{\em The goal of finding a scalable index is to find polynomial (preferably degree less than 2) $n$ storage algorithm that allows search in polynomial $d$ time.} 

This stands in contrast to the {\em curse of dimensionality conjecture}, as stated in \cite{indyk:1}:

{\em
If $d=\omega(\log n)$ and $d=n^{o(1)}$, any sequence of indexes built on a sequence of datasets $X_d\subset\Sigma_d$ allowing exact nearest neighbour search in time polynomial in $d$ must use $n^{\omega(1)}$ space.
}

The conjecture remains unproven in the case of general indexing schemes.
The goal of this article is to show that at least for pivot-based indexes the above conjecture holds even in a strengthened form.

\section{Pivot-based indexing}
We will focus on one class of indexing schemes, the pivot-based index (e.g. AESA, MVPT, BKT,...see \cite{chavez:1} and \cite{zezula}).
The index is built using a set of {\em pivots} $\{p_1\ldots p_k\}$ from $\Omega$, and consists of the array of $n\times k$ distances 
 \[\rho(x,p_i)\text{, }1\leqslant i\leqslant k\text{, }x\in X.\]
Given a range query with radius $r$ and centre $q$, the $k$ distances $\rho(q,p_1)\ldots \rho(q,p_k)$ are computed so that $\rho(q,x)$ can be lower-bounded by the triangle inequality:
\[ \rho (q,x)\geqslant\sup_{1\leqslant i\leqslant k}|\rho (q,p_i)-\rho (x,p_i)|. \]
It is useful to think of a new distance function,
\[\rho_k (q,x):= \sup_{1\leqslant i\leqslant k}|\rho (q,p_i)-\rho (x,p_i)|.\]

The fact that $\rho (q,x)\geqslant \rho_k (q,x)$ can be used to discard all $x$ satisfying $\rho_k (q,x) > r$. For the remaining points,
the algorithm will 
verify if  
$ \rho (q,x) \leq r$. If it is true, the point is returned.

We will only analyze {\em range} queries;
$k$-nearest neighbour queries can always be simulated by a range query of suitable radius 
\cite{zezula}.

For a query centre $q$ denote 
by $C_q$ all the points of $X$ satifying $\rho_k (q,x) > r$, i.e. all the elements to be discarded.
Making $C_q$ large is the primary way of cutting the cost of search in our cost model. 
Of course we can achieve this trivially with a very large number of pivots. This will defeat the purpose however as
\[\text{Cost of range search }= k+|X \backslash C_q|\]

The most often used solution is to keep adding pivots as long as it is found experimentally to decrease the cost of search.
If $k$ is small, on the order of $\log n$ (as often space limitations require),
the most important component of cost becomes the size of $X\\C_q$ and this is where the choice of pivots would seem to matter.
 Various approaches to pivot selection have been investigated in \cite{bustos}.
 The empirical results seem to suggest that a moderate reduction in the number of distance 
 computations can be achieved, although the relative improvement drops with increasing dimension.
 
\begin{remark}[The number of pivots $k$.]
There are indexing schemes, like AESA \cite{zezula} where $k=n$.
However, in many situations $n^2$ storage is not practical, and it has even been argued that under certain assumptions the optimal number of pivots is on the order of $\ln n$ \cite{chavez:1}. It is also true that the query algorithm we analyze has complexity at least $k$ so only schemes with $k=o(n)$ can claim to beat the curse of dimensionality.
\end{remark}
 
\section{\label{s:conc}Concentration of measure}
Perhaps the most compelling way to describe the concentration of measure phenomenon is to draw a picture.
We will attempt to draw the (surface of the) unit sphere $\mathbb{S}^d$ for various $d$, by sampling points and projecting them onto a flat surface.
Any orthogonal projection, say taking the first 2 coordinates, will give the picture similar to that in Figure \ref{figure:spheres}.
Under the sampling approach, it appears that high dimensional spheres are ``small'' even if we know their diameter to be a constant irrespective of $d$.

\begin{figure}[h]
\centering
		\includegraphics[height=1.5in,width=1.5in]{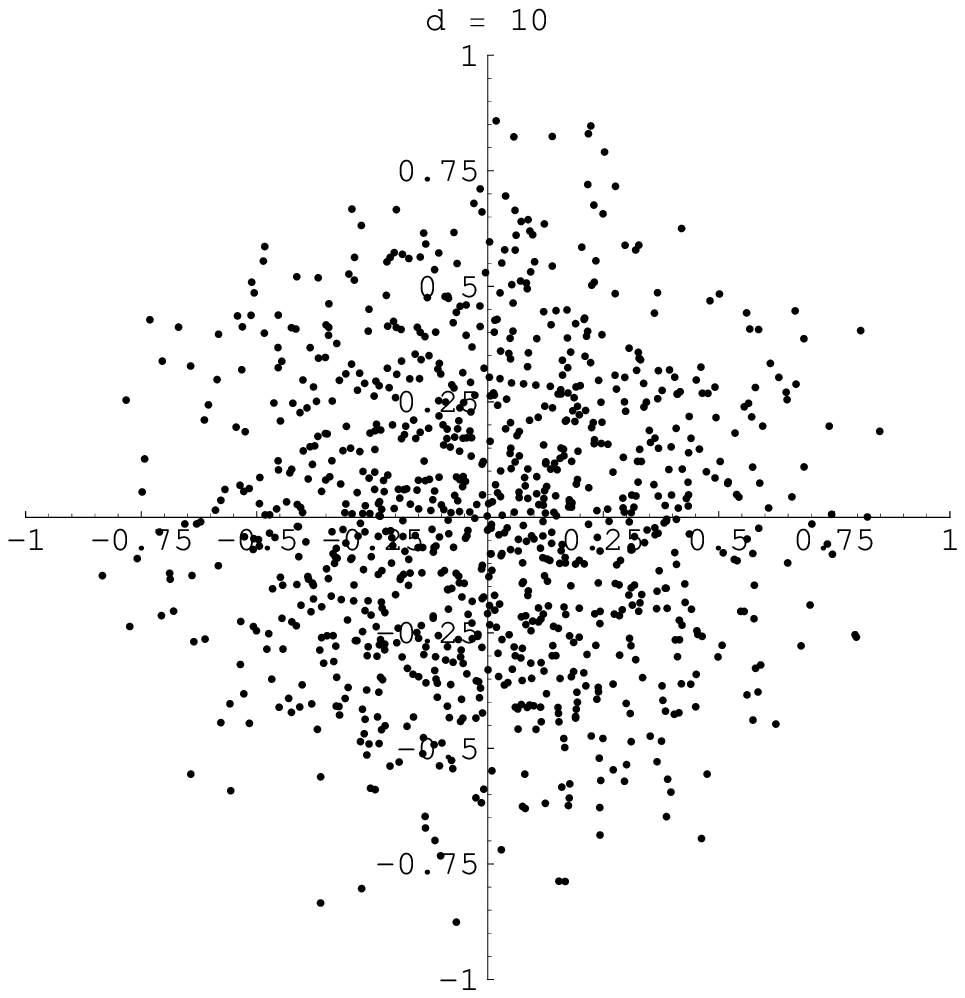}
		\hfill
		\includegraphics[height=1.5in,width=1.5in]{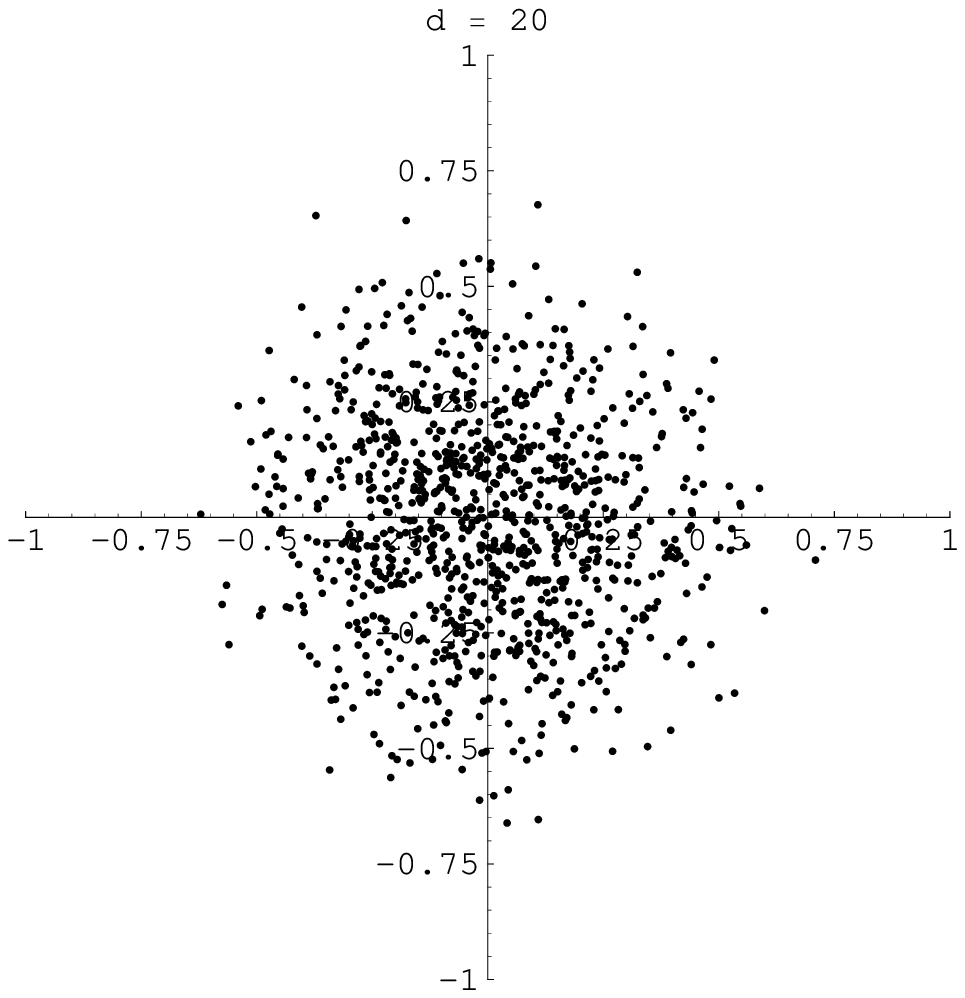}
		\hfill
		
		\vspace{2ex}
		
		\includegraphics[height=1.5in,width=1.5in]{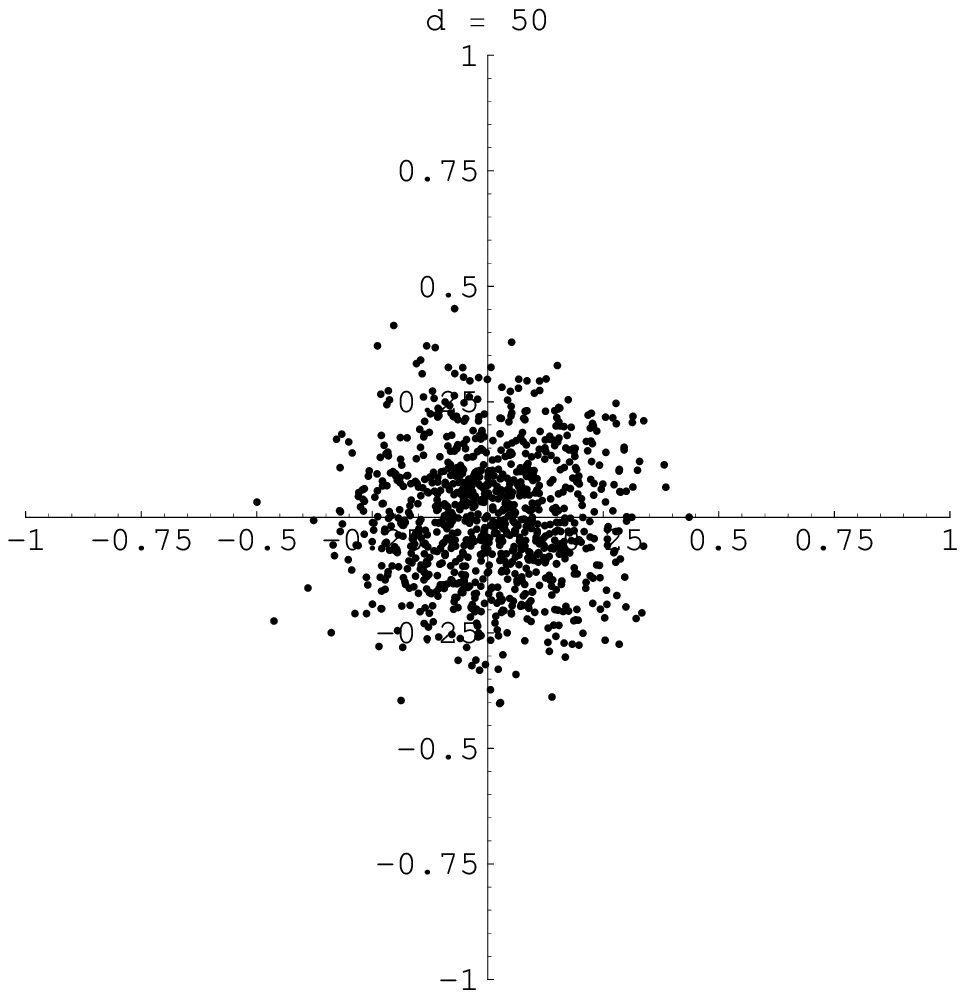}
		\hfill
		\includegraphics[height=1.5in,width=1.5in]{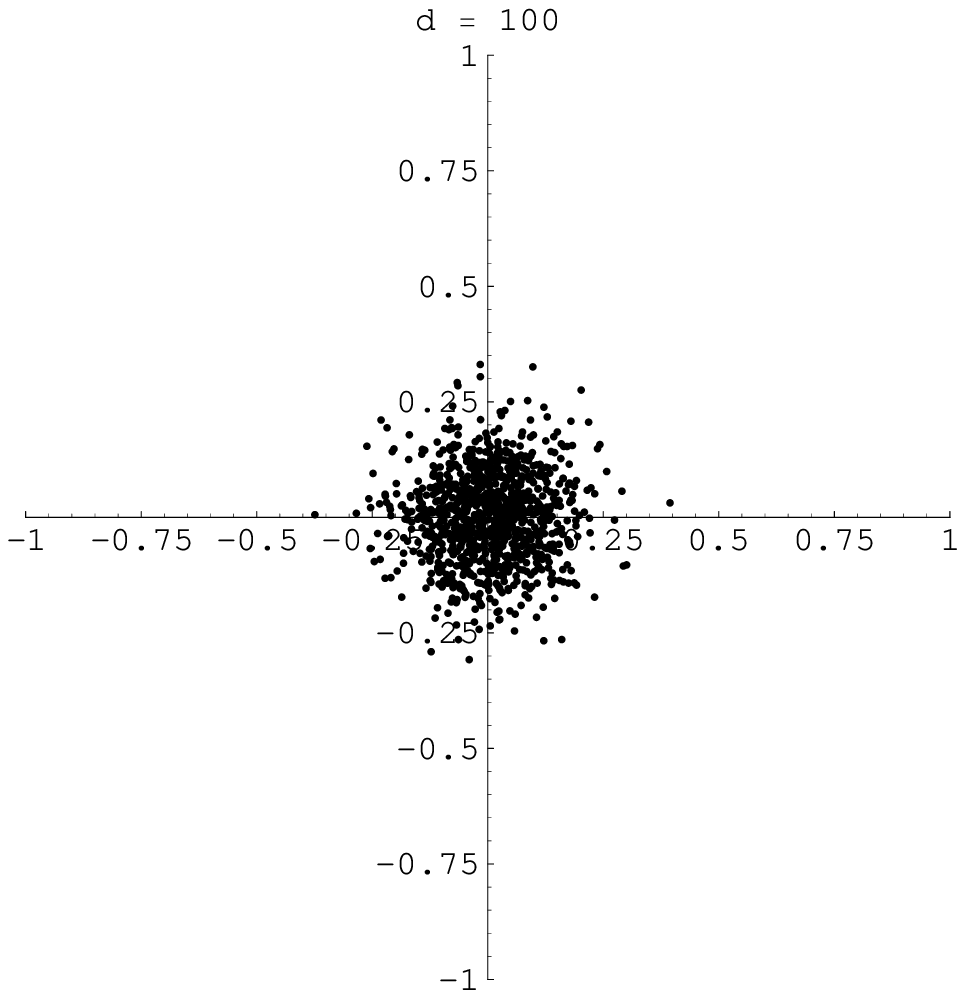}
			\caption{Projection of randomly sampled spheres of various dimensions d=10, 20, 50, 100}
	\label{figure:spheres}
\end{figure}


This phenomenon is observed in a much greater variety of situations and formalized as follows.
Given a metric space $(\Omega ,\rho)$, define the $\epsilon$-neighborhood $A_{\epsilon}$ of $A\subset\Omega$ as
\[A_{\epsilon}=\{\omega\in\Omega|\rho(\omega,a)<\epsilon\text{ for some } a\in A\}.\]

%

\begin{definition}
The {\em concentration function}  $\alpha = \alpha_{\Omega}$ of
a metric space with measure $(\Omega ,\rho, \mu)$ is defined as
\begin{align*}
\alpha(0)&=1/2,\\
\alpha(\epsilon)& = \sup\{1-\mu(A_{\epsilon})|A\subset\Omega,\mu(A)\geqslant\tfrac{1}{2}\}\quad ,\,\epsilon > 0.\\
\end{align*}
\end{definition}

To put it less formally, we are trying to measure how much of the space remains after ``fat'' is added to a
 somewhat large set in the form of an $\epsilon$ neighborhood. When very little remains, we say that the concentration of measure takes place.

\begin{example}{The spheres $\mathbb{S}^d$ in $\mathbb{R}^{d+1}$,}
taken with the geodesic or Euclidian distance and the normalized invariant measure, produce a concentration function bounded as follows \cite{milman:86}:
\[\alpha_n(\epsilon)\leqslant\text{e}^{-(d-1)\epsilon^2/2}.\]
\begin{figure}[h]
\centering
		\includegraphics[height=1in,width=2in]{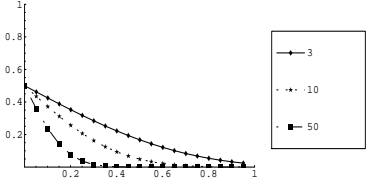}
	\caption{The concentration functions of various spheres}
	\label{figure:spheresConc}
\end{figure}
In this case an exact expression for the concentration function is known \cite{milman:86}, based on the fact that the half-sphere, among all subsets of measure at least $1/2$, will always produce the smallest $\epsilon$-neighborhood, no matter the $\epsilon$.
A plot of the resulting concentration functions, for several values of $d$, appears in Figure \ref{figure:spheresConc}.
\end{example}

\begin{definition} 
A sequence of spaces $(\Omega_d )_{d=1}^{\infty}$ is a {\em normal L\'evy family} \cite{milman:86} if $C$, $c>0$ exist such that
\[\alpha(\epsilon)<C\text{e}^{-c\epsilon ^2d}.\]
\end{definition}

\begin{example}{The Balls $\mathbb{B}^d$,}
taken with the Euclidian distance and the uniform probability measure ($d$-dimensional Lebesgue), form a normal L\'evy family.
\end{example}

\begin{example}{The Hamming Cubes $\Sigma ^d$}
form a normal L\'evy family under the normalized Hamming metric and the uniform measure.
\end{example}

The concentration of measure can be equivalently described in terms of Lipschitz functions. 
%
Recalling that a function $f:\Omega\rightarrow\mathbb{R}$ is {\em 1-Lipschitz }if 
\[\forall x,y\in\Omega,\,\, |f(x)-f(y)|\leqslant \rho (x,y).\]

Recalling further that 
a {\em median} of function $f:(\Omega,\rho,\mu)\rightarrow\mathbb{R}$ is any number $M$ satisfying:
\[\mu\{\omega | f(\omega)\leqslant M\}\geqslant 1/2\text{ and }  \mu\{\omega | f(\omega) \geqslant M \}\geqslant 1/2.\]

It is then relatively straightforward to prove:

\begin{theorem}[Cf. \cite{milman:86}]
\label{thm:lip}
 For a 1-Lipschitz function f defined on space $(\Omega,\mu,\rho)$:
 \[\forall\epsilon > 0,\quad \mu\{\omega |\ |f(\omega)-M|>\epsilon\} < 2\alpha(\epsilon).\]
\end{theorem}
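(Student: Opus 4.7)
The plan is to reduce the statement to two one-sided concentration bounds by slicing the set $\{|f-M|>\epsilon\}$ into $\{f \geq M+\epsilon\}$ and $\{f \leq M-\epsilon\}$, and then to control each half using the definition of the median together with the concentration function applied to a carefully chosen ``large'' set.

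First I would set $A = \{\omega : f(\omega) \leq M\}$ and $B = \{\omega : f(\omega) \geq M\}$. By the definition of a median, both $\mu(A) \geq 1/2$ and $\mu(B) \geq 1/2$, so the concentration function applies to their $\epsilon$-neighborhoods: $\mu(\Omega \setminus A_\epsilon) \leq \alpha(\epsilon)$ and $\mu(\Omega \setminus B_\epsilon) \leq \alpha(\epsilon)$. The next step is the only real computation: I would show $A_\epsilon \subseteq \{f < M+\epsilon\}$. For $\omega \in A_\epsilon$ choose $a \in A$ with $\rho(\omega,a) < \epsilon$; then by the 1-Lipschitz property,
\[
f(\omega) \leq f(a) + |f(\omega)-f(a)| \leq f(a) + \rho(\omega,a) < M + \epsilon.
\]
Symmetrically, $B_\epsilon \subseteq \{f > M-\epsilon\}$. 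Taking complements yields $\{f \geq M+\epsilon\} \subseteq \Omega \setminus A_\epsilon$ and $\{f \leq M-\epsilon\} \subseteq \Omega \setminus B_\epsilon$, and hence each of these sets has measure at most $\alpha(\epsilon)$.

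Finally I would union-bound:
\[
\mu\{|f-M| > \epsilon\} \leq \mu\{f \geq M+\epsilon\} + \mu\{f \leq M-\epsilon\} \leq 2\alpha(\epsilon),
\]
which gives the claimed inequality (the strict version follows from the strict inequality in the excerpt's definition of $A_\epsilon$, which forces a strict gap in the Lipschitz estimate above).

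There is no serious obstacle here: the argument is essentially a bookkeeping exercise once one has the right two sets $A$ and $B$. The only subtle point worth being careful about is matching the strictness conventions in the definitions of $A_\epsilon$ and of the concentration function so that the one-sided bounds really give $\mu(\Omega \setminus A_\epsilon) \leq \alpha(\epsilon)$ rather than something weaker; beyond that, it is a direct two-sided tail bound derived from a one-sided geometric statement about neighborhoods of level sets.
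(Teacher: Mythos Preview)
Your argument is correct and is exactly the standard proof that the paper alludes to: the paper does not actually prove this theorem but merely says it is ``relatively straightforward'' and cites \cite{milman:86}, where precisely your level-set argument appears. One small caveat: your remark on strictness does not quite go through, since the concentration function is defined as a \emph{supremum}, so one only gets $1-\mu(A_\epsilon)\leqslant\alpha(\epsilon)$ and hence $\mu\{|f-M|>\epsilon\}\leqslant 2\alpha(\epsilon)$; the strict ``$<$'' in the paper's statement is a minor imprecision on the authors' part rather than something your Lipschitz estimate can rescue.
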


The relevance of concentration of measure in indexing is noted in \cite{pestov:00}. Observe that
\[\rho (\cdot,p):\Omega\rightarrow\mathbb{R}:\omega\mapsto \rho (\omega,p)\]
is 1-Lipschitz for any $p$ and in particular a pivot. Hence Theorem \ref{thm:lip} can be applied to obtain a bound on the deviation from the median $M=M_p$ of function $\rho (\cdot,p)$:
\[\forall r > 0,\quad \mu\{\omega |\ |\rho (\omega,p)-M|>r\} < 2\alpha(r).\]
We combine these statements for all pivots $p_i$:
\begin{equation*}
\forall r > 0,\quad \mu\{\omega |\sup_{1\leqslant i\leqslant k}|\rho (\omega,p_i)-M_i|>r / 2\} < 2k\alpha(r / 2),
\end{equation*}
as the probability of the union can always be upperbounded by the sum of the probabilities.
We note that no assumptions about independence are used: the sequence $(p_i)$ can be chosen in any way.
Next, for all query centres $q$ except a set of measure $<1-\alpha(r/2)$:
\[\forall r > 0,\quad \mu\{\omega |\sup_{1\leqslant i\leqslant k}|\rho (\omega,p_i)-\rho (q,p_i)|>r\} < 2k\alpha(r / 2).\]
We could introduce a set 
\[\mathcal{C}_q=\{\omega \vert \rho _k(q,\omega) > r \}\]
and think of $C_q$ as the observation of $\mathcal{C}_q$ under $\mu_{\# }$. To recap: for a randomly chosen query centre and each query radius $r>0$, with probability $>1-\alpha(r/2)$,
%
\begin{equation}\label{eq:conc}
\mu (\mathcal{C}_q )< 2k\alpha(r / 2).
\end{equation}
\begin{remark}
We point out that Theorem \ref{thm:lip} applied to the distance function $\rho$ gives a bound on the variance of $\rho(\cdot,p)$.
This, together with a ``uniformity of view'' type assumption as in \cite{chavez:2} leads us to conclude that the variance of $\rho(\cdot,\cdot)$ converges to zero in L\'evy families.
This argument can be formalized to demonstrate the connection to the assumption of vanishing variance on the sequence of data spaces made in \cite{shaft:06}.
In our view that assumption is just a variation on concentration of measure.
The differences lie in certain technical details, like the division by expectation of $\rho(\cdot,\cdot)$ in \cite{shaft:06}.
Here we simply avoid the issue by normalizing spaces so that the expectation of $\rho(\cdot,\cdot)$ tends to a constant.
This normalization also fixes the problem of distance to nearest neigbour (e.g. \cite{weber:98}) as we demonstrate in the next section.

\end{remark}
\section{Radius of queries in L\'evy families}\label{sec:radius}
In our asymptotic analysis, 
we would like to normalize spaces so that the median distance between two points stays about the same.
Here we will extract consequences for
the typical radius of a query -- which we will assume to be the distance to the nearest neighbour of query centre.

\begin{lemma}[M. Gromov, V.D. Milman]\cite{gromov:83}\label{l:m}
Let $(\Omega,\rho,\mu)$ denote a metric space with measure and $\alpha$ its concentration function.
Then if $A\subset\Omega$ is such that $\mu(A)>\alpha(\gamma)$ for some $\gamma>0$, it implies that $\mu(A_{\gamma})>1/2$.
\end{lemma}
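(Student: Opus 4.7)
The plan is to argue by contradiction, exploiting the definition of the concentration function applied to the complement of $A_\gamma$. Concretely, I would assume $\mu(A_\gamma) \leqslant 1/2$ and set $B = \Omega \setminus A_\gamma$; then $\mu(B) \geqslant 1/2$, so $B$ is an admissible competitor in the supremum defining $\alpha(\gamma)$. The definition of the concentration function immediately yields $1 - \mu(B_\gamma) \leqslant \alpha(\gamma)$, equivalently $\mu(B_\gamma) \geqslant 1 - \alpha(\gamma)$.

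The heart of the argument is the geometric observation that $A$ and $B_\gamma$ are disjoint. Indeed, if some $a \in A$ belonged to $B_\gamma$, there would exist $b \in B$ with $\rho(a,b) < \gamma$; but then $b$ would lie in $A_\gamma$ by definition of the open $\gamma$-neighborhood, contradicting $b \in B = \Omega \setminus A_\gamma$. Once disjointness is established, additivity of $\mu$ gives $\mu(A) + \mu(B_\gamma) \leqslant 1$, hence
\[\mu(A) \leqslant 1 - \mu(B_\gamma) \leqslant \alpha(\gamma),\]
contradicting the hypothesis $\mu(A) > \alpha(\gamma)$.

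I do not anticipate a serious obstacle here: the lemma is essentially a one-line manipulation once one thinks to look at $B = \Omega \setminus A_\gamma$. The only point to watch is the strict vs.\ non-strict inequality bookkeeping (the definition of $A_\epsilon$ in the paper uses a strict $<$, so the disjointness $A \cap B_\gamma = \emptyset$ is clean without needing to pass to closures), and the fact that $B$ is Borel measurable (as the complement of an open-ball union, $A_\gamma$ is open, hence Borel, so $B \in \mathcal{B}_\Omega$ and the supremum in the definition of $\alpha$ legitimately sees it).
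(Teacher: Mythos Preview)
Your argument is correct and is exactly the standard proof of this lemma. Note that the paper does not actually supply its own proof here: the lemma is simply quoted from \cite{gromov:83} and used as a black box, so there is nothing further to compare against.
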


\begin{theorem}
Let $(\Omega_d,\rho_d,\mu_d,X_d)_{d=1}^{\infty}$ be a sequence of metric spaces with measure, forming a L\'evy family, together with i.i.d. samples $X_d$.
Assume that $n=n_d=|X_d|=d^{o(1)}$.
Furthermore, if $M_d$ denotes the median value of $\left\{ \rho_d(\omega_1,\omega_2)\vert \omega_i\in\Omega_d \right\}$, we assume that 
$M_d=\Theta(1)$, that is, for some fixed $c_1,c_2>0$, $\quad\forall d, c_1<M_d<c_2$.

Let $\rho_d^{(NN)}(\omega)$ denote the distance to the nearest neighbour of $\omega\in\Omega_d$ in $X_d$.
Define $m_d$ to be the median of $\rho_d^{(NN)}(\omega)$.
Then there exists some $c_3>0$ and some $D$ such that $\forall d\geqslant D$, $m_d>c_3$.
\end{theorem}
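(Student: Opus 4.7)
The plan is a union bound over the $n$ sample points combined with pointwise concentration of measure. Writing
\[ \mu_d\{\omega : \rho^{(NN)}_d(\omega) < r\} \leq \sum_{i=1}^n \mu_d(B_r(x_i)), \]
each ball is bounded by $\mu_d(B_r(x)) \leq \alpha_d(M_x - r)$ (valid whenever $r < M_x$), where $M_x$ denotes the median of $\rho_d(x,\cdot)$. Since the L\'evy bound $\alpha_d(\epsilon) \leq Ce^{-c\epsilon^2 d}$ dominates the polylog factor $n=d^{o(1)}$, the sum will fall below $1/2$ at $r = c_3 := c_1/8$ as soon as we establish that every $x_i \in X_d$ satisfies $M_{x_i} \geq c_1/4$.

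The crux is verifying $M_x \geq c_1/4$ for typical $x$. The map $x \mapsto M_x$ is $1$-Lipschitz, a standard consequence of the triangle inequality applied to $\rho_d(\cdot,x)$, so by Theorem~\ref{thm:lip} it concentrates around its own median $M_f$. To locate $M_f$ I would argue by contradiction: suppose $M_f \leq c_1 - 3\delta$ for some $\delta > 0$. Concentration of $M_x$ ensures all but an $O(\alpha_d(\delta))$-fraction of $y \in \Omega_d$ satisfy $M_y \leq c_1 - 2\delta$, and for each such $y$ concentration of $\rho_d(\cdot,y)$ around $M_y$ gives $\mu_d(B_{c_1 - \delta}(y)) \geq 1 - O(\alpha_d(\delta))$. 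Integrating in $y$,
\[ \mathbb{P}(\rho_d(\omega_1,\omega_2) \leq c_1 - \delta) \geq 1 - O(\alpha_d(\delta)). \]
But $M_d \geq c_1$ forces this probability to be at most $1/2$, which is impossible once $\alpha_d(\delta)$ is smaller than an absolute constant. In a L\'evy family this happens for any fixed $\delta > 0$ and $d$ large, so $M_f \geq c_1 - 3\delta$ eventually; choosing $\delta = c_1/6$ yields $M_f \geq c_1/2$ for $d$ past some threshold.

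Concentration of $M_x$ around $M_f \geq c_1/2$ then gives $\mu_d\{x : M_x \geq c_1/4\} \geq 1 - 2\alpha_d(c_1/4)$, so the probability that the i.i.d.\ sample $X_d$ lies entirely inside this good set is at least $1 - 2n\alpha_d(c_1/4)$, which tends to $1$ by the L\'evy bound and $n = d^{o(1)}$. Conditional on this event, each $\mu_d(B_{c_1/8}(x_i)) \leq \alpha_d(c_1/8)$, and the union bound gives
\[ \mu_d\{\omega : \rho^{(NN)}_d(\omega) < c_1/8\} \leq n\,\alpha_d(c_1/8) < 1/2 \]
for $d$ sufficiently large, yielding $m_d > c_3 := c_1/8$.

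The main obstacle is the CDF comparison in the middle paragraph: it is where we bridge the only global hypothesis available (the lower bound $M_d \geq c_1$ on the joint distance median) to the per-point statement $M_x \geq c_1/4$ needed by the union bound, and it must juggle concentration at two scales simultaneously (concentration of $M_x$ in $x$ on the outside, and concentration of $\rho_d(\cdot,y)$ around $M_y$ on the inside).
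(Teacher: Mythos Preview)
Your argument is correct, but it follows a genuinely different route from the paper's. The paper argues by contradiction: if $m_d\to 0$ along a subsequence, then $\mu_d\bigl(\bigcup_{x\in X_d}B_{m_d}(x)\bigr)\geq 1/2$, so by pigeonhole some single ball $B_{m_d}(\omega_d)$ has measure at least $1/(2n_d)$. Since $\alpha_d(c_1/8)<1/(2n_d)$ eventually, the Gromov--Milman lemma (Lemma~\ref{l:m}) applied twice inflates this ball to a set of measure close to $1$ and diameter at most $3c_1/4<c_1$, forcing most pairs $(\omega_1,\omega_2)$ to have distance below $c_1$ and contradicting $M_d>c_1$. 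This argument never inspects the individual points of $X_d$: it uses only $|X_d|=n_d$, so the conclusion holds for \emph{every} $n_d$-point subset, not merely i.i.d.\ samples. Your approach instead controls each ball $B_r(x_i)$ via the pointwise median $M_{x_i}$, which forces you to show that all sample points are ``good'' ($M_{x_i}\geq c_1/4$); this is where you genuinely need the i.i.d.\ hypothesis and why your conclusion is only with probability $1-2n\alpha_d(c_1/4)\to 1$. The trade-off: the paper's proof is shorter and strictly stronger (deterministic in $X_d$), while yours is more constructive and avoids the Gromov--Milman inversion, at the cost of the extra Lipschitz/median layer you flagged as the main obstacle.
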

\begin{proof}
Assume the conclusion fails, then without loss of generality and proceeding to subsequence if necessary, $m_d\rightarrow 0$.
By definition of $m_d$, we know that for any $d$,
\[\mu_d\left( \bigcup_{x\in X}{B_{m_d}(x)}\right)\geqslant \frac{1}{2}.\]
It follows that 
\[n_d \sup_{\omega\in\Omega_d}{\mu_d \left( B_{m_d}(\omega)\right)}\geqslant \frac{1}{2},\]
and so we can find for any $d$ a point $\omega_d\in\Omega_d$ such that
\[\mu_d\left( B_{m_d}(\omega_d)\right)\geqslant \frac{1}{2n}.\]
If we denote by $\alpha_d$ the concentration functions of our spaces $\Omega_d$ we know by assumption the existence of $C,c>0$ s.t.
\[\forall d, \alpha_d(\epsilon)\leqslant C\text{e}^{-c\epsilon^2d}=d^{-o(1)}.\]
Hence we can find $d'$ s.t. $\alpha_{d'}(\gamma) <1/{2n_{d'}}$ and $m_{d'}<{c_1}/8$, where $\gamma = c_1/8$ as well.
By lemma \ref{l:m}
\[\mu_{d'}\left( B_{m_{d'}}(\omega_{d'})\right)_{\gamma}\geqslant \frac{1}{2}.\]
It then follows that
\[\mu_{d'}\left(\left( B_{m_{d'}}(\omega_{d'})\right)_{\gamma}\right)_{\gamma}\geqslant 1-C\text{e}^{-c\gamma^2d'} \]
that is, since $m_{d'}+2\gamma<3c/8$,
\[\mu_{d'}\left(B_{3c_1/8}(\omega_{d'})\right)\geqslant 1-C\text{e}^{-c\gamma^2d'}. \]
But $\text{diameter}\left(B_{3c_1/8}(\omega_{d'})\right)\leqslant 3c_1/4$, so in $\Omega_{d'}\times\Omega_{d'}$ the measure of the set of points $(\omega_1,\omega_2)$ for which $\rho_{d'}(\omega_1,\omega_2)<c_1$ is at least
\[\left(1-\frac{1}{2n_{d'}}\right)^2,\]
obviously contradicting $M_{d'}>c_1$.
\end{proof}

This result frees us from having to consider a radius that vanishes as $n$, $d$ go to infinity.
With this achieved, let us recap our goal: to show that a large proportion of queries are slow, something along the lines of:
\begin{equation}\label{eq:z2}
\mathrm{median}_{q,p_i,r}\left(\mu_{n}( C_{q,p_1\ldots p_{k(n)},r(n)})\right)\stackrel{P}{\longrightarrow} 0 \text{ as } n,d\longrightarrow\infty,
\end{equation}
where the median is taken over all the queries under consideration: any $q\in\Omega_d$ and any $r$ at least as large as the distance to the nearest neighbour of $q$ in $X$.
As well, for each $d$ and $n=n_d$ we would like to also consider all possible pivot-based index schemes (as long as $k$ is within certain ranges we will specify later).
%
So far we have shown, although the proof was just sketched (and with the detail about $k$ left out) that 
\begin{equation}\label{eq:z1}
\mathrm{median}_{q,p_i,r}\left(\mu( \mathcal{C}_{q,p_1\ldots p_{k(n)},r(n)})\right)\longrightarrow 0 \text{ as } n,d\longrightarrow\infty
\end{equation}
What we need is to find out when (\ref{eq:z1}) implies (\ref{eq:z2}).
To do so we will summon the powerful machinery of statistical learning theory.
\section{Statistical learning theory}
Statistical learning theory has already been used in the analysis and design of indexing algorithms \cite{kleinberg} and is a vast subject.
We will just focus on the generalization of the Glivenko-Cantelli theorem due to Vapnik and Chervonenkis.

\begin{theorem}[Glivenko-Cantelli]
Given sample $(X) = X_1,X_2\ldots X_n$ distributed i.i.d. according to {\em any }measure $\mu$ on $\mathbb{R}^n$, we have:
\[\sup_{r\in\mathbb{R}} \left\vert \mu_n (-\infty , r] - \mu (-\infty , r]\right\vert \stackrel{P}{\longrightarrow} 0. \]
\end{theorem}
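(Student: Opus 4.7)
My approach is the classical two-step argument: pointwise convergence from the weak law of large numbers, then uniform convergence via a monotonicity-based finite-grid approximation. Write $F(r)=\mu(-\infty,r]$ for the true CDF and $F_n(r)=\mu_n(-\infty,r]$ for its empirical counterpart.

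The pointwise step is immediate: for fixed $r$, the indicators $\mathbf{1}_{X_i\leq r}$ are i.i.d.\ Bernoulli random variables with mean $F(r)$, so the weak law of large numbers gives $F_n(r)\stackrel{P}{\longrightarrow} F(r)$. The same argument applied to $\mathbf{1}_{X_i<r}$ yields $F_n(r-)\stackrel{P}{\longrightarrow} F(r-)$.

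To upgrade this to a uniform bound, fix $\epsilon>0$ and build a finite quantile grid by setting $r_j=\inf\{r:F(r)\geq j\epsilon\}$ for $j=1,\dots,N$ where $N=\lceil 1/\epsilon\rceil$, with $r_0=-\infty$ and $r_{N+1}=+\infty$. By right-continuity of $F$ we have $F(r_{j-1})\geq (j-1)\epsilon$ and $F(r_j-)\leq j\epsilon$, so $F(r_j-)-F(r_{j-1})\leq\epsilon$. For any $r\in[r_{j-1},r_j)$, monotonicity of $F_n$ and $F$ sandwiches $F_n(r)-F(r)$ between $F_n(r_{j-1})-F(r_j-)$ and $F_n(r_j-)-F(r_{j-1})$; combining with the jump bound above yields
\[ |F_n(r)-F(r)|\leq \max_{0\leq j\leq N+1}\bigl\{|F_n(r_j)-F(r_j)|,\,|F_n(r_j-)-F(r_j-)|\bigr\}+\epsilon. \]
Since the maximum on the right runs over finitely many points, the pointwise step makes it tend to $0$ in probability, and so $\sup_r|F_n(r)-F(r)|$ is eventually at most $2\epsilon$ with high probability. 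Since $\epsilon$ was arbitrary, the theorem follows.

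The only real subtlety is the treatment of atoms of $\mu$, which is precisely why the sandwich must carry both $F(r_j)$ and $F(r_j-)$; the quantile definition of the $r_j$ handles any large jumps automatically, so no special case analysis is needed. As an alternative route more in line with the paper's subsequent use of statistical-learning machinery, one could bypass the monotonicity argument entirely and invoke the Vapnik--Chervonenkis uniform convergence theorem, noting that the class of half-lines $\{(-\infty,r]:r\in\mathbb{R}\}$ has VC dimension one.
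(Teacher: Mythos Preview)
Your argument is correct and is the standard textbook proof of Glivenko--Cantelli: weak law of large numbers for pointwise convergence, then a quantile grid exploiting monotonicity of the CDF to pass to uniform convergence, with the left-limit values carried along to handle atoms. There is nothing to compare against, however, because the paper does not supply a proof of this theorem at all---it is quoted as a classical result and used only as motivation for the restatement in terms of set classes $\mathcal{A}$ and the subsequent Vapnik--Chervonenkis machinery. Your closing remark that the VC route (half-lines have VC dimension one) would also suffice is exactly the direction the paper then takes for the more general uniform-convergence statements it actually needs.

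One small caveat worth flagging: the theorem as stated in the paper says ``measure $\mu$ on $\mathbb{R}^n$'' while the conclusion is one-dimensional (the supremum over $r\in\mathbb{R}$ of $|\mu_n(-\infty,r]-\mu(-\infty,r]|$). You have, sensibly, read this as a typo for $\mathbb{R}$ and proved the one-dimensional statement; that is the only coherent reading, and your proof is fine under it.
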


\noindent

We can see this statement in terms of the empirical measures of particular subsets converging to their true measure.
This is made clear when we restate the theorem as follows: 
\begin{equation}
\sup_{A \in\mathcal{A}} |\ \mu_n (A) - \mu (A)\ | \stackrel{P}{\longrightarrow} 0, \label{close1}
\end{equation}
where
\[\mathcal{A} = \{ (-\infty , r] | r\in\mathbb{R}\},\]
which makes more apparent a path for extension: to generalize to other collections of subsets $\mathcal{A}$.



A collection $\mathcal{A}$ ``colours'' the sample $X$ as follows. Each $A\in\mathcal{A}$ will assign 1 to $X_i$ if $X_i\in A$, and $0$ otherwise.
We denote by $N(X)$ the {\em number} of such different colourings of $X$ generated by all $A\in\mathcal{A}$.
Clearly $N(X)\leqslant 2^n$. What is surprising is that in many situations, despite a seemingly rich $\mathcal{A}$, we have $N(X)\ll 2^n$. 

%
%
%

\begin{definition}
The {\em growth function} $G=G_{\mathcal{A}}$ of a family $\mathcal{A}$ is defined by
\[G(n) = \ln\ \sup_{|X|=n}N(X).\]
It is independent of $\mu$ and the choice of sample $X$. 
\end{definition}
There are two cases to consider for an upper bound for the growth function \cite{vapnik}:
\begin{itemize}
\item  for all $n$, $G(n)=n\ln 2$
\item or, for the largest $\Delta$ such that $G(\Delta)=\Delta\ln 2$,
\begin{equation*}
G(n)\left\{\begin{array}{rrl}
=& n\ln 2 & \text{if } n\leqslant \Delta\\
\leqslant &\Delta (1+\ln (n/\Delta)) & \text{if } n>\Delta\\
\end{array} \right.
\end{equation*}
\end{itemize}
This $\Delta$ is the so-called {\em VC dimension} and it turns out that its finiteness is
a necessary and sufficient condition for 
\eqref{close1}. 
The rate of convergence 
is as follows (\cite{vapnik} p.148):

\begin{theorem}\label{thm:conv}[Vapnik--Chervonenkis]\label{thm:gkg}
For a collection $\mathcal{A}$ of subsets of $\Omega$, of finite VC dimension $\Delta$, and any measure $\mu$ on $\Omega$, we have that for any $\varepsilon>0$,
\begin{align*}
 P&\left[ \sup_{A \in\mathcal{A}} |\ \mu_n (A) - \mu (A)\ |>\varepsilon \right]< \\
  &4\exp \left[\left(\frac{\Delta (1+\ln (2n/\Delta))}{n} -\left(\varepsilon - \frac{1}{n}\right)^2\right) n \right].\\
\end{align*}
\end{theorem}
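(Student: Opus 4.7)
The plan is to follow the classical symmetrization-based proof of Vapnik and Chervonenkis, converting the uniform deviation between $\mu_n$ and $\mu$ into a combinatorial problem on a doubled sample of size $2n$, and then controlling the effective cardinality of $\mathcal{A}$ on that sample via the growth function $G$ and the Sauer-type bound already recalled above the theorem.

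First I would introduce an independent ``ghost sample'' $X' = (X_1',\ldots,X_n')$ drawn i.i.d.\ from $\mu$, with associated empirical measure $\mu_n'$. The key symmetrization inequality to establish is
$$P\Bigl[\sup_{A\in\mathcal{A}}|\mu_n(A)-\mu(A)|>\varepsilon\Bigr]\leq 2\,P\Bigl[\sup_{A\in\mathcal{A}}|\mu_n(A)-\mu_n'(A)|>\varepsilon-\tfrac{1}{n}\Bigr].$$
Its proof picks, on the event that the left-hand-side sup exceeds $\varepsilon$, a (random) set $A^{\star}$ achieving this, and applies Chebyshev's inequality to $\mu_n'(A^{\star})$ conditional on $X$, using $\mathrm{Var}(\mu_n'(A^{\star}))\leq 1/(4n)$. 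This yields conditional probability at least $1/2$ that $|\mu_n'(A^{\star})-\mu(A^{\star})|\leq 1/n$, producing the outer symmetrization factor $2$ together with the additive correction $1/n$ that reappears in the final theorem.

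Next, conditional on the combined sample $Z=(X,X')$ of size $2n$, the family $\mathcal{A}$ distinguishes at most $N(Z)\leq \exp G(2n)$ subsets, so the sup collapses to a finite maximum. Exploiting the exchangeability of each pair $(X_i,X_i')$ I would introduce independent random signs $\sigma_i\in\{\pm 1\}$, rewriting $\mu_n(A)-\mu_n'(A)$ as a Rademacher-type sum of $n$ bounded independent terms, and apply Hoeffding's inequality to obtain, for each fixed $A$, a tail bound of order $2\exp\!\bigl(-n(\varepsilon-1/n)^2/2\bigr)$. A union bound over the at most $\exp G(2n)$ distinct colourings, followed by the Sauer-type estimate $G(2n)\leq \Delta(1+\ln(2n/\Delta))$, then produces the exponent $\Delta(1+\ln(2n/\Delta))-n(\varepsilon-1/n)^2$ claimed in the theorem, and combining with the symmetrization factor $2$ gives the constant $4$ in front.

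The main obstacle is the careful bookkeeping of constants, in particular producing the sharp $(\varepsilon-1/n)^2$ rather than a cruder $(\varepsilon/2)^2$: this requires using Chebyshev with the true variance $1/(4n)$ in the symmetrization step rather than the naive ``ghost sample deviates by at most $\varepsilon/2$'' estimate, and aligning the Hoeffding exponent with the targeted rate. Once the symmetrization lemma is established sharply, the remainder of the argument is routine and entirely classical.
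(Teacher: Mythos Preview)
The paper does not prove this theorem; it is quoted from Vapnik's book (p.~148) and used as a black-box tool in the subsequent analysis, so there is no proof in the paper to compare your sketch against. Your outline is the standard symmetrization route and is structurally correct.

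That said, your constant-tracking has a slip worth flagging. Chebyshev with $\mathrm{Var}(\mu_n'(A^\star))\le 1/(4n)$ gives $P\bigl[|\mu_n'(A^\star)-\mu(A^\star)|>t\bigr]\le 1/(4nt^2)$, which at $t=1/n$ equals $n/4$ and is useless for $n\ge 3$; the deviation you can guarantee with conditional probability $\ge 1/2$ is of order $1/\sqrt{n}$, not $1/n$. Similarly, the sign-flip Hoeffding step on terms bounded in $[-1/n,1/n]$ yields an exponent $-n(\varepsilon')^2/2$ rather than the $-n(\varepsilon')^2$ appearing in the stated bound. Vapnik's precise exponent $(\varepsilon-1/n)^2$ arises from a somewhat different organization of the permutation argument on the doubled sample (not the generic ghost-sample-plus-Rademacher template), so if you want to recover his exact constants you should follow the original derivation rather than the version you sketched. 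None of this affects the paper itself, which only invokes the inequality.
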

The convergence is eventually like $\exp (-\varepsilon^2 n)$, which is again a fast rate of convergence. 
Since no information about the measure $\mu$ is incorporated, the left side can be replaced by its supremum taken over all possible probability measures on the domain $\Omega$.


A natural restatement of these results is to ask how big does the sample size $n$ have to be for the expression on the left to be less than some $\eta >0$. 
Solving for $\eta$ and the use of some technical inequalities (cf e.g. \cite{mendelson:03}) yields:
\begin{equation}
\label{eq:n}
n\geqslant \frac{128}{\varepsilon ^2}\left(\Delta\log \frac{2\text{e}^2}{\varepsilon} +\log\frac{8}{\eta}\right). 
\end{equation}

Calculations of VC dimension have been done for various objects (e.g. \cite{dudley:84}, \cite{vapnik}, \cite{devroye}):
The VC dimension of half-spaces $\{ x\in\mathbb{R}^d | (x,v)\geqslant b\}$ in $\mathbb{R}^d$ is $d+1$. 
The VC dimension of all open (or closed) balls in $\mathbb{R}^d$
is also $d+1$.
Axis-aligned rectangular parallelepipeds in $\mathbb{R}^d$, i.e. sets of form
\[ [a_1,b_1]\times [a_2,b_2]\times\ldots\times [a_d,b_d] \]
have a VC dimension of $2d$. 

Our interest is in calculating the VC dimension of all possible set of form $\mathcal{C}_q$, the collection of which for a fixed $k$ we denote:
\begin{equation}
\label{eq:A}
\mathcal{A}=\mathcal{A}_k=\{\mathcal{C}_{q,p_1\ldots p_{k(n)},r(n)}|q\in\Omega , p_i\in\Omega, r>0\}
\end{equation}

As
\begin{align*}
 \mathcal{C}&= \{\omega : \sup_i |\ \| \omega-p_i\| - \| q-p_i\| \ | > r \}\\
 						& = \big( \bigcap_i \{\omega : \ |\ \| \omega-p_i\| - \| q-p_i\| \ | \leqslant r \}\big)^c, 
\end{align*}
we can proceed through several steps.
A set of the form 
\[\{\omega : \ |\ \| \omega-p_i\| - \| q-p_i\| \ | \leqslant r \}\]
is a ``spherical shell,'' and
an intersection of shells is an interesection of sets from $\mathcal{A}\cup\mathcal{A}^c$, where $\mathcal{A}$ is the collection of all balls.
It is easy to show that given a collection $\mathcal{A}$ the complement collection 
$\mathcal{A}^c=\{A^c | A\in\mathcal{A}\}$
has the same VC dimension.
The VC dimension of balls was quoted above as $d+1$, hence the VC dimension of complements of balls is $d+1$ as well. 
The VC dimension of the {\em union} of the two collections is \[(d+1) + (d+1) +1 = 2d + 3,\]
as a consequence of a general result \cite{vidyasagar:03}:
If a collection $\mathcal{A}$ has VC dimension $\Delta_a$ and a collection $\mathcal{B}$ has VC dimension $\Delta_b$, the union $\mathcal{A}\cup\mathcal{B}$
has VC dimension at most $\Delta_a+\Delta_b+1$.

A result for intersection of sets is mentioned in \cite{blumer}:
\begin{lemma}
For $(\Omega,\rho)=(\mathbb{R}^d,L^2)$, an upper bound on the VC dimension of $\mathcal{A}_{\cap_k}$, composed of $k$-fold interesections of elements of a family $\mathcal{A}$ of VC dimension $\Delta$ is $2\Delta k\ln (3 k)$.
\end{lemma}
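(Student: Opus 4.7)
My plan is to run the standard Sauer--Shelah argument applied to $k$-fold intersections; nothing in the statement actually uses the Euclidean structure, so I would carry out the proof for an arbitrary ground set, with the hypothesis that $\mathcal A$ has VC dimension $\Delta$. Fix an arbitrary finite set $S\subset \Omega$ of size $n$ and consider its trace under $\mathcal A$, that is, $\mathcal A|_S:=\{A\cap S : A\in\mathcal A\}$. By Sauer's lemma (which is exactly the second case of the growth function bound quoted earlier from \cite{vapnik}), one has
\[
|\mathcal A|_S| \;\leqslant\; \Phi_\Delta(n) \;:=\;\sum_{i=0}^{\Delta}\binom{n}{i}\;\leqslant\;\left(\frac{en}{\Delta}\right)^{\!\Delta},
\]
provided $n\geqslant \Delta$ (the trivial bound $2^n$ handles $n<\Delta$).

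Next I would observe that every trace of an element of $\mathcal A_{\cap_k}$ on $S$ arises as a $k$-fold intersection of traces of elements of $\mathcal A$: if $B=A_1\cap\cdots\cap A_k$, then $B\cap S=(A_1\cap S)\cap\cdots\cap(A_k\cap S)$. Consequently the number of distinct traces is bounded by the number of $k$-tuples from $\mathcal A|_S$, giving
\[
|\mathcal A_{\cap_k}|_S|\;\leqslant\;|\mathcal A|_S|^{k}\;\leqslant\;\left(\frac{en}{\Delta}\right)^{\!\Delta k}.
\]
Now if $S$ is shattered by $\mathcal A_{\cap_k}$, the left side equals $2^n$, so any $n$ which is shattered must satisfy
\[
n\ln 2\;\leqslant\;\Delta k\,\ln\!\left(\frac{en}{\Delta}\right).
\]

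The remaining step, and in my view the only genuinely technical one, is to invert this transcendental inequality and show that it forces $n\leqslant 2\Delta k\ln(3k)$. The standard trick is to set $n=2\Delta k\ln(3k)$ and plug in: the right-hand side becomes $\Delta k\bigl(1+\ln(2k\ln(3k))-\ln\Delta\bigr)$, which for $k$ sufficiently large is dominated by $n\ln 2 = 2\ln 2\cdot \Delta k\ln(3k)$, since $\ln\ln(3k)=o(\ln(3k))$ and $2\ln 2>1$. The factor $3$ inside the logarithm and the leading constant $2$ are chosen precisely to absorb the $\ln\ln$ slack and the additive $\ln(2e)$ term for all $k\geqslant 1$; this is a monotonicity-in-$n$ argument (both sides are increasing, but the left side grows strictly faster), so once the inequality holds at $n=2\Delta k\ln(3k)$ it holds for all larger $n$, ruling out any larger shattered set. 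The hard part is therefore purely the bookkeeping of constants in this last estimate; the combinatorial content is entirely in Sauer's lemma plus the elementary trace-counting observation.
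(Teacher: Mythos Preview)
The paper does not prove this lemma at all; it merely quotes the bound from \cite{blumer}, so there is no in-paper argument to compare against. Your Sauer--Shelah approach is exactly the standard proof used in that reference, and the combinatorial core (bound the trace of $\mathcal A$ on $S$ by $(en/\Delta)^\Delta$, then raise to the $k$-th power for $k$-fold intersections, then invert $n\ln 2\leqslant \Delta k\ln(en/\Delta)$) is correct.

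Two small corrections in your bookkeeping. First, when you substitute $n=2\Delta k\ln(3k)$ into $\Delta k\ln(en/\Delta)$, the factors of $\Delta$ cancel and you get $\Delta k\bigl(1+\ln(2k\ln(3k))\bigr)$; the stray $-\ln\Delta$ in your display should not be there. Second, the claim that the constants $2$ and $3$ make the inequality hold ``for all $k\geqslant 1$'' with natural logarithms is slightly too optimistic: for $k=1$ the bound is trivial since $\mathcal A_{\cap_1}=\mathcal A$, and for very small $k\geqslant 2$ a direct check shows the inequality $2\ln 2\,\ln(3k)>1+\ln(2k\ln(3k))$ just barely fails. The original statement in \cite{blumer} uses base-$2$ logarithms, which is what makes the constants go through uniformly; with $\ln$ one needs either a marginally larger constant or the sharper form of Sauer's bound. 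None of this affects the asymptotic use the paper makes of the lemma.
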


Hence we can conclude that the VC dimension of $\mathcal{A}_k$ for the case $\Omega\subset\mathbb{R}^n$ is bounded by
\begin{equation}\label{eq:vc}
2(2d+3)(2k)\ln ((3)(2k)) = k(8d+12)\ln (6k),
\end{equation}
where $k$ is the number of pivots. 


Another example comes from considering the Hamming cube. As there 
are $2^d$ points in a $d$-dimensional Hamming cube, and at most $d$ different radii, so at most $d2^d$ different balls exist.
We know from e.g. \cite{blumer} that
if the class $\mathcal{A}$ is finite, its VC dimension is bounded by $\log_2\vert\mathcal{A}\vert$.
Disregarding the small leftover term, the VC dimension for balls in the Hamming cube is about $d$.

Summarizing:

\begin{theorem}\label{thm:deltabounds}
Let us denote by $\Delta$ the VC dimension of collection $\mathcal{A}_k$ as defined in equation \eqref{eq:A}.
Then upper bounds on $\Delta$, depending on the metric space, are as follows:
\begin{itemize}
\item
For $(\mathbb{R}^d, L^2)$, $\Delta\leqslant k(8d+12)\ln (6k)$.
\item
For $(\mathbb{R}^d, L^\infty)$, $\Delta\leqslant k(16d+4)\ln (6k)$.
\item
For $(\Sigma^d,\rho)$, $\Delta\leqslant k(8d+8\log_2{d}+4)\ln (6k)$.
\end{itemize}
\end{theorem}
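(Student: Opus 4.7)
The strategy is to decompose each set $\mathcal{C} = \mathcal{C}_{q,p_1,\ldots,p_k,r}$ as a Boolean combination of balls, and then control the VC dimension layer by layer using the three operations compiled in the preceding paragraphs.

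First I would rewrite
\[ \mathcal{C}^c = \bigcap_{i=1}^k \bigl\{\omega : \bigl|\,\|\omega - p_i\| - \|q - p_i\|\,\bigr| \leq r\bigr\}, \]
and observe that the $i$-th spherical shell is exactly the intersection of the closed ball $B_{\|q-p_i\|+r}(p_i)$ with the complement of the open ball $B_{\|q-p_i\|-r}(p_i)$. Hence every element of $\mathcal{A}_k$ is (up to complementation) a $2k$-fold intersection of sets drawn from $\mathcal{B} \cup \mathcal{B}^c$, where $\mathcal{B}$ denotes the family of all balls in $\Omega$. Since passage to complements preserves the VC dimension of a class, bounding $\mathrm{VC}(\mathcal{A}_k)$ reduces to bounding the VC dimension of such $2k$-fold intersections.

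Next I would chain three facts quoted in the excerpt: (i) the VC dimension $\Delta_{\mathcal{B}}$ of $\mathcal{B}$ in each of the three spaces; (ii) the complement identity $\mathrm{VC}(\mathcal{B}^c)=\mathrm{VC}(\mathcal{B})$ together with the union bound $\mathrm{VC}(\mathcal{B}\cup\mathcal{B}^c) \leq 2\Delta_{\mathcal{B}}+1$; and (iii) the intersection lemma of Blumer et al., yielding
\[ \mathrm{VC}(\mathcal{A}_k) \;\leq\; 2(2\Delta_{\mathcal{B}}+1)(2k)\ln(6k) \;=\; (8\Delta_{\mathcal{B}}+4)\,k\ln(6k). \]
It then remains only to insert the right $\Delta_{\mathcal{B}}$ for each space. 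For Euclidean balls take $\Delta_{\mathcal{B}} = d+1$, giving $(8d+12)k\ln(6k)$. For $L^\infty$ balls, which form a subfamily of axis-aligned rectangular parallelepipeds, take $\Delta_{\mathcal{B}} \leq 2d$, giving $(16d+4)k\ln(6k)$. For Hamming balls, enumerate: there are at most $d\cdot 2^d$ such balls (one per centre and radius), and the VC dimension of any finite class is bounded by $\log_2$ of its cardinality, so $\Delta_{\mathcal{B}} \leq \log_2(d\cdot 2^d) = d+\log_2 d$, giving $(8d+8\log_2 d+4)k\ln(6k)$. These match the three bounds claimed.

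The one place where care is needed is the applicability of the Blumer--Ehrenfeucht--Haussler--Warmuth intersection lemma beyond $(\mathbb{R}^d,L^2)$, since the lemma is stated in the excerpt in the Euclidean context. I would add a short remark noting that its proof is purely combinatorial and depends only on the finiteness of the VC dimension of the input class, so it transfers verbatim to $(\mathbb{R}^d,L^\infty)$ and to the Hamming cube $\Sigma^d$. Once that transfer is acknowledged, the rest of the argument is a routine assembly of the tools already collected, and no new estimate is required.
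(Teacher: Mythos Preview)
Your proof is correct and follows exactly the paper's own route: decompose $\mathcal{C}^c$ into $k$ spherical shells, express each shell as a ball intersected with a ball-complement, apply the union bound $\mathrm{VC}(\mathcal{B}\cup\mathcal{B}^c)\leq 2\Delta_{\mathcal{B}}+1$, and then the Blumer et al.\ intersection lemma with $2k$ factors to get $(8\Delta_{\mathcal{B}}+4)k\ln(6k)$, plugging in $\Delta_{\mathcal{B}}=d+1$, $2d$, and $d+\log_2 d$ respectively. Your added remark that the intersection lemma is purely combinatorial (and hence applies beyond $(\mathbb{R}^d,L^2)$) is a welcome clarification the paper leaves implicit.
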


\section{Main result\label{s:main}}
\begin{theorem}
Consider a sequence of metric spaces $(\Omega_d,\rho_d)$, where $d=1,2,3,\ldots$ and the VC dimension of closed balls in $(\Omega_d,\rho_d)$ is $O(d)$.
Assume every $\Omega_d$ supports a Borel probability measure $\mu_d$ so that for some $C,c>0$ the concentration functions $\alpha_d$ of $(\Omega_d,\rho_d,\mu_d)$ satisfy \[\forall\epsilon>0,\quad\alpha_d(\epsilon)\leqslant C\text{e}^{-c\epsilon^2d}.\]
Select for each $d$ an i.i.d. sample $X_d$ of size $n_d$ from $\Omega_d$, according to $\mu_d$, where the sample size $n_d$ satisfies $d=\omega(\log n_d)$ and $d=n_d^{o(1)}$.
Suppose further for every $d$ a pivot index for similarity search is built using
$k$ pivots, where \[k=o(n_d/d).\]
Fix arbitrarily small $\varepsilon,\eta >0$.
Suppose we only ask queries whose radius is equal or greater to the distance to nearest neighbour of query centre $q\in\Omega_d$ in $X_d$.
\newline\bfseries 
Then there exists a $D$ such that for all $d\geqslant D$, the probability that {\em at least half} the queries on dataset $X_d$ take less than $(1-\varepsilon)n_d$ time is less than $\eta$.

Furthermore, if we allow the likelihood $\eta$ to depend on $d$, we can pick $\eta_d$ so that the above holds true and
\[\lim_{D\rightarrow\infty}{\prod_{d=D}^{\infty}(1-\eta_d)} = 1.\]
We emphasize that this result is independent of the selection of pivots.
\end{theorem}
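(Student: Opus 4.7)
The strategy is to transfer the $\mu_d$-measure bound~(\ref{eq:conc}) on $\mathcal{C}_{q,p_1,\ldots,p_k,r}$ to the empirical measure $\mu_\#$ via Vapnik--Chervonenkis uniform convergence, and then translate the resulting smallness of $|C_q|/n_d$ into a lower bound on the query cost $k + n_d - |C_q|$.

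I would begin by sharpening the radius theorem of Section~\ref{sec:radius}. Its conclusion $m_d > c_3$ alone allows up to half of all query centres to have short nearest-neighbour distance, which is too weak. Since $\rho_d^{(NN)}(\cdot)$ is $1$-Lipschitz on $\Omega_d$ for any fixed sample, Theorem~\ref{thm:lip} applied to it gives $\mu_d\{q : |\rho_d^{(NN)}(q) - m_d| > \varepsilon_0\} \leq 2Ce^{-c\varepsilon_0^2 d}$. Fixing any $0 < \varepsilon_0 < c_3$, this exceptional set has $\mu_d$-measure $o(1)$, and outside it every admissible radius obeys $r \geq c' := c_3 - \varepsilon_0$.

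Next I would apply~(\ref{eq:conc}) uniformly: for any fixed pivots $p_1,\ldots,p_k$, there is an exceptional set $B \subset \Omega_d$ of $\mu_d$-measure at most $2k\alpha_d(c'/2) \leq 2kCe^{-c(c')^2 d/4}$ such that every $q \notin B$ and every $r \geq c'$ gives $\mu_d(\mathcal{C}_{q,p_1,\ldots,p_k,r}) \leq 2kCe^{-c(c')^2 d/4}$; with $k \leq n_d$ and $d = \omega(\log n_d)$, this right-hand side is $\exp(\log(2kC) - \Theta(d)) = o(1)$ at super-polynomial rate, and in particular below $\varepsilon/2$ for large $d$. In parallel I would apply Theorem~\ref{thm:gkg} to the class $\mathcal{A}_k$ from~(\ref{eq:A}); by Theorem~\ref{thm:deltabounds} its VC dimension $\Delta$ is $O(kd\ln k)$, and the hypotheses $k = o(n_d/d)$ with $d = \omega(\log n_d)$ and $d = n_d^{o(1)}$ force $\Delta = o(n_d)$. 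The sample-complexity form~(\ref{eq:n}) then supplies, for any fixed $\varepsilon, \eta > 0$ and all sufficiently large $d$, a ``good-sample'' event $E_d$ of $\mu_d^{n_d}$-probability at least $1 - \eta$ on which $\sup_{A \in \mathcal{A}_k} |\mu_\#(A) - \mu_d(A)| < \varepsilon/2$.

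Combining the three ingredients: on $E_d$, for any $q$ outside $\{q : \rho_d^{(NN)}(q) < c'\} \cup B$ and any admissible $r$, $\mu_\#(\mathcal{C}_{q,p_1,\ldots,p_k,r}) \leq \mu_d(\mathcal{C}_{q,p_1,\ldots,p_k,r}) + \varepsilon/2 \leq \varepsilon$, so $|C_q| \leq \varepsilon n_d$ and the query cost is at least $(1-\varepsilon)n_d$. The two exceptional sets together have $\mu_d$-measure $o(1)$ and hence strictly less than $1/2$ once $d$ is large, which establishes the main claim. For the refined statement with $\eta_d \to 0$, I would exploit the $\log(1/\eta)$ dependence in~(\ref{eq:n}) to push $\eta_d$ as small as $e^{-\Theta(n_d)}$, so that $\sum_d \eta_d < \infty$ and $\prod_{d \geq D}(1 - \eta_d) \to 1$ as $D \to \infty$ by the standard infinite-product identity. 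The most delicate point in the plan is verifying $\Delta = o(n_d)$ from the hypothesis $k = o(n_d/d)$: naively this yields only $\Delta = o(n_d \ln n_d)$, and the missing logarithmic slack must be recovered from $d = \omega(\log n_d)$ (which forces $\ln k \leq \ln n_d = o(d)$) together with careful bookkeeping inside~(\ref{eq:n}).
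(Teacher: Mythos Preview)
Your proposal follows essentially the same three-step architecture as the paper's own proof sketch: (i) bound $\mu_d(\mathcal{C}_q)$ via concentration and the pivot union bound~(\ref{eq:conc}); (ii) invoke the Vapnik--Chervonenkis uniform deviation bound over the class $\mathcal{A}_k$, using Theorem~\ref{thm:deltabounds} to control its VC dimension; (iii) handle the product $\prod(1-\eta_d)$ via summability of $\eta_d$. The paper simply asserts $\Delta = o(n_d)$ and works only with the median bound $m_d > c_3$ from Section~\ref{sec:radius}, accepting ``at least half'' of query centres as good.

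Where you differ is in two refinements that tighten the paper's sketch rather than replace it. First, you observe that $\rho_d^{(NN)}$ is itself $1$-Lipschitz, so Theorem~\ref{thm:lip} upgrades the paper's ``at least half'' of centres with large admissible radius to ``all but an $o(1)$ fraction''; this cleanly absorbs the additional exceptional set $B$ coming from~(\ref{eq:conc}) without the slight undercount the paper glosses over. Second, you correctly flag that $k = o(n_d/d)$ together with $\Delta = O(kd\ln k)$ gives only $\Delta = o(n_d\ln n_d)$ at face value, whereas the paper simply writes $\Delta = o(n)$. Your proposed fix---using $\ln k \leq \ln n_d = o(d)$ from $d = \omega(\log n_d)$ and tracking the $\ln(2n/\Delta)$ factor inside Theorem~\ref{thm:gkg} rather than the cruder bound~(\ref{eq:n})---is the right place to look, though you should note that this still requires a short calculation the paper omits; neither argument is fully spelled out, but yours at least isolates the issue.
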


\begin{proof}[Sketch of a proof]
From Eq. (\ref{eq:conc}) we know that, for a vast majority of query centres $q$,
\[\mu (\mathcal{C}_q) \leqslant M_2k\text{e}^{-dr^2},\]
where $M_2$ is some constant.

We will sacrifice a certain number of sets of form $\mathcal{C}_q$ so that $r$ can be considered a constant (see section \ref{sec:radius}): we will proceed with at least half the queries having radius $r$ above a constant independent of $d$.
Hence the quantities that vary in $d$ are $n$ and $k$.
Since $d$ is superlogarithmic in $n$,
\begin{equation*}
\begin{array}{ll}
&\forall c>0 \text{, } d>c\log n\\
\Rightarrow& \forall c>0 \text{, } \exp(-d) <\exp(-c\log n)\\
\Rightarrow&\forall c>0 \text{, }\exp(-d)<cn.
\end{array}
\end{equation*}

So $\text{e}^{-dr^2}=o(n)$, and hence $\mu (\mathcal{C}_q) = o(n)$.
In fact this holds for at least half the queries $q$ simultaneously, so:
\[\mathrm{median}\sup_{\mathcal{C}_q}\mu (\mathcal{C}_q)=o(n).\]
From the previous section, we know that only for large values of $n$ will empirical measures be close (up to $\varepsilon$) to actual measures with likelihood (1-$\eta$ ).
The lower bound on $n$ then naturally depends on $\varepsilon$, $\eta$ but also on the VC dimension $\Delta$ of the collection $\mathcal{A}_k$.

Let us fix $\varepsilon=1/2$ and assume $\eta$ is bounded by some value less than 1.
Then by pooling all constants, including $\varepsilon$ and $\eta$ but not $\Delta$ we can rewrite expression \eqref{eq:n} as:
\begin{equation}\label{eq:nsimple}n\geqslant M_1 \Delta,\end{equation}
where $M_1>1$.
What we would like to avoid is to have the right part of this expression grow linearly in $n$.
We know an upper bound on $\Delta$ depends on $k$ and $d$ as established in Theorem \ref{thm:deltabounds}. As our concern is for asymptotic behaviour we will simplify this bound to $ k d \ln k$ .

Combining $d=o(n)$ with the asymptotic condition on $k$, we conclude that:
\[\Delta=o(n),\] 
and hence asymptotically we know that the right side of expression \eqref{eq:nsimple} falls (much) under $n$.
Therefore we are able to conclude:
\begin{equation}\label{eq:close2}
P(\sup_{\mathcal{C}_q}\vert \mu_{\#}(C_q)-\mu(\mathcal{C}_q)\vert > \varepsilon)<\eta,
\end{equation}
which, combined with $\varepsilon=1/2$ and $\mathrm{median}\sup_{\mathcal{C}_q}\mu (\mathcal{C}_q)=o(n)$, gives the first part of the result.

According to expression
(\ref{eq:n}),

\[\eta\geqslant \exp \left( \Delta\log \left(\frac{2\text{e}^2}{\varepsilon}\right)+\log8-\frac{\varepsilon^2 n}{128}\right)
%
=\exp (-d^{\omega(1)}).\]

Assuming independent choices of the datasets $X_d$, and assuming that for each $d$ the probability of an event is at least $1-\eta_d$, we aim to prove that
\[\lim_{D\rightarrow\infty}{\prod_{d=D}^{\infty}(1-\eta_d)} = 1.\]
%
%
As $\eta_d$ goes to 0 at least as fast as $\text{e}^{-d}$, it is enough to show that
\begin{equation}
\label{eq:enough}
\lim_{D\rightarrow\infty}{\prod_{d=D}^{\infty}(1-\text{e}^{-d})} = 1.\end{equation}
%
%
%

Observing \cite{ash} that for any sequence $0\leqslant\eta_d\leqslant 1$,
\[1-\sum_{d=1}^{N}{\eta_d}\leqslant \prod_{d=1}^{N}(1-\eta_d)\leqslant \exp\left(\sum_{d=1}^{N}{-\eta_d}\right),\]
we can extend this, for any $D$ to:
\[1-\sum_{d=D}^{\infty}{\eta_d}\leqslant \prod_{d=D}^{\infty}(1-\eta_d)\leqslant \exp\left(\sum_{d=D}^{\infty}{-\eta_d}\right).\]
Summing the geometric series, we obtain Eq. (\ref{eq:enough}).
\end{proof}

\subsection{Conclusion}
We have established a rigorous asymptotically linear lower bound on the expected average performance of the optimal pivot-based indexing schemes for similarity search in datasets randomly sampled from domains whose dimension goes to infinity. 
The examples given above of the various spaces exhibiting normal concentration of measure should convince the reader that many of the most naturally occuring domains and measure distributions are such.


This is not the first lower bound result for pivoting algorithms for exact similarity search.
A specific lower bound for pivot-based indexing already mentioned above is that of \cite{chavez:2}:
\[\tilde{d}\log n.\]
This result assumes that $k=\Theta(\log n)$.
Furthermore and more importantly, the pivot selection is assumed to be random, as opposed to our (much stronger) bound that is applicable to {\em any} pivot selection technique.

Other, more general asymptotic analyses considering more classes of indexing schemes \cite{weber:98, shaft:06} fix $n$ or in the case of \cite{weber:98} also fail to distinguish between the dataset and the dataspace making results appear stronger than they actually are. 

The aim in \cite{shaft:06} was to demonstrate that
\[\frac{\mathrm{E}(\text{cost})}{n}\stackrel{n}{\longrightarrow} 1\]
which came at the expense of any results on the rate of convergence. We chose instead to prove a weaker result, with convergence to some number close to 1/2 but with estimates on the rate of convergence.

It should be assumed that the hypotheses of our paper are universal. Rather, our theoretical analysis confirms that at least in some settings, the curse of dimensionality for pivot-based schemes is indeed in the nature of data.
Probably a more realistic situation from the viewpoint of applications would be that of an intrinsically low dimensional dataset contained in a high-dimensional domain, and performing an asymptotic analysis of various indexing schemes in this setting is an interesting open problem.


\bibliographystyle{alpha}

\end{document}